\crefname{equation}{Eq.}{Eqs.}
\newtheorem{theorem}{Theorem}[section]
\newcommand{\authorblock}[3]{%
  \begin{minipage}[t][4.5em][s]{0.30\linewidth}
    \centering
    \vspace{0pt}
    \textbf{#1} 
    \vfill
    #2 
    \vfill
    \url{#3} 
  \end{minipage}
}
\newcommand{\lucas}[1]{\textcolor{teal}{\textbf{LB: }\textit{#1}}}
\title{\textbf{A New Incentive Model For Content Trust}}
\author{%
  \vspace{2em}
  \small
  \centering
  \authorblock{Lucas Barbosa}{PwC Australia}{lucas.chu.barbosa@au.pwc.com}
  \hfill
  \authorblock{Sam Kirshner}{UNSW}{s.kirshner@unsw.edu.au}
  \hfill
  \authorblock{Rob Kopel}{PwC Australia}{rob.kopel@au.pwc.com}
  \vspace{2em}

  \makebox[\textwidth][c]{%
    \authorblock{Eric Tze Kuan Lim}{UNSW}{e.t.lim@unsw.edu.au}
    \authorblock{Tom Pagram}{PwC Australia}{tom.pagram@au.pwc.com}
  }
  \vspace{1em}
}
\date{}
\begin{document}
\newglossaryentry{veracity}{
    name={Veracity},
    text={veracity},
    description={Habitual conformity to facts}
}

\newglossaryentry{veracity bond}{
    name={Veracity bond},
    text={veracity bond},
    description={Financial collateral staked by creators to signal confidence in the veracity of their content}
}

\newglossaryentry{counter-veracity bond}{
    name={Counter-veracity bond},
    text={counter-veracity bond},
    description={Financial collateral staked by a challenger when disputing the veracity of content from a creator}
}

\newglossaryentry{misinformation}{
    name={Misinformation},
    text={misinformation},
    description={False information that is spread regardless of intent to deceive as opposed to the intentional deception of disinformation}
}

\newglossaryentry{synthetic content}{
    name={Synthetic content},
    text={synthetic content},
    description={Digital content wholly or substantially generated by non-human systems}
}

\newglossaryentry{prosocial}{
    name={Prosocial},
    text={prosocial},
    description={Someone who is public-spirited and not greedy}
}

\newglossaryentry{creators}{
    name={Creator},
    text={creators},
    description={A participant creating original content}
}

\newglossaryentry{challengers}{
    name={Challenger},
    text={challengers},
    description={Someone questioning the veracity or accuracy of content posted by creators}
}

\newglossaryentry{jurors}{
    name={Juror},
    text={jurors},
    description={An impartial evaluator responsible for adjudicating challenges raised by challengers.}
}

\newglossaryentry{contest}{
    name={Contest},
    text={contest},
    description={A structured dispute in which a creator’s content, backed by a veracity bond, is challenged by one or more challengers staking a counter-veracity bond, with the final outcome determined by a jury’s evaluation of the evidence}
}

\newglossaryentry{challenge period}{
    name={Challenge period},
    text={challenge period},
    description={The allotted period of time for which challenges may dispute the content of a creator}
}

\newglossaryentry{deliberation period}{
    name={Deliberation period},
    text={deliberation period},
    description={The allotted period of time for which jurors may reach a verdict on a particular contest}
}

\newglossaryentry{platform}{
    name={Platform},
    text={platform},
    description={An overarching system or framework that facilitates the interactions among creators, challengers, and jurors. Not to be mistaken for a physical platform}
}

\begin{titlepage}

\maketitle
\thispagestyle{empty}

\begin{abstract}
\noindent This paper outlines an incentive-driven and decentralized approach to verifying the veracity of digital content at scale. Widespread misinformation, an explosion in AI-generated content and reduced reliance on traditional news sources demands a new approach for content authenticity and truth-seeking that is fit for a modern, digital world. By using smart contracts and digital identity to incorporate ``trust’’ into the reward function for published content, not just engagement, we believe that it could be possible to foster a self-propelling paradigm shift to combat misinformation through a community-based governance model. The approach described in this paper requires that content creators stake financial collateral on factual claims for an impartial jury to vet with a financial reward for contribution. We hypothesize that with the right financial 
and social incentive model users will be motivated to participate in crowdsourced fact-checking and content creators will place more care in their attestations. This is an exploratory paper and there are a number of open issues and questions that warrant further analysis and exploration.
\end{abstract}

\end{titlepage}

\pagenumbering{roman}
\setcounter{page}{1} 

\newpage
\tableofcontents
\newpage

\pagenumbering{arabic}

\section{Introduction}

\subsection{Background and Motivation}

The Internet and social networks have profoundly reshaped how individuals communicate, access economic opportunities, and engage with information. However, the decentralized nature of online content has also contributed to widespread misinformation and a reduced reliance on traditional news sources and government \cite{whototrust}. This erosion of confidence is compounded by the sheer volume of digital content, which makes discerning truth from falsehood increasingly challenging.

Recent advances in large language models (LLMs) further complicate this landscape. State-of-the-art LLMs have shown remarkable capability for persuasive discourse, often surpassing human performance in debate contexts \cite{rescala2024}, particularly when they have access to personal data that can be used to tailor arguments \cite{salvi2025}. As these models grow more sophisticated and accessible, the scale and quality of \gls{synthetic content} are projected to increase significantly \cite{ai-growth:2018, merging-with-machine:2010}. Indeed, experts estimate that AI systems could generate up to 90\% of online content by 2026 \cite{deepfakes}.

In this evolving context, malicious actors will likely exploit automated content-generation systems to spread false or misleading information. Without robust countermeasures, distinguishing authentic information from manipulated content will become progressively difficult \cite{adapting-to-llms:2023, societal-impact-of-llms:2021}, as AI-driven misinformation can circulate faster than mainstream news \cite{shao:2017}. Given the velocity and volume of digital content, traditional fact-checking approaches may struggle to keep pace \cite{pace-of-fake-news:2017}, underscoring the pressing need for new strategies to verify both the integrity and the provenance of online information.

\subsection{The Problem Statement}

Alongside these emerging challenges, modern media comprises two interconnected substructures: traditional (institutional) media and social media. Audiences have shown reduced reliance on traditional news sources \cite{whototrust}, partly due to perceived gaps in journalistic transparency and the deterioration of in-depth reporting \cite{trust-decline-journalism:2019}. High-profile instances, such as coverage of the alleged weapons of mass destruction in Iraq (2003) \cite{wmds} and the misrepresentation of factors contributing to the 2008 Global Financial Crisis \cite{the-big-short}, have eroded confidence in mainstream outlets. Consequently, trust between the public and institutions including governments stands at an unprecedented low \cite{trust-decline-institution:2017, edelman-trust-report:2023}.

Meanwhile, social media platforms inherently reward content that drives engagement. Because misinformation typically attracts more immediate attention and can be produced quickly at lower cost than carefully researched, fact-based material \cite{mitSpread2018}, there is a structural incentive to prioritize engagement over accuracy. Moreover, users often face minimal repercussions for spreading misinformation, which can foster confirmation bias and reinforce polarized ``echo chambers'' \cite{sunstein:echo-chambers, pariser:echo-chambers}, thereby distorting public perceptions of truth \cite{social-media-echo-chamber}.

Collectively, these conditions reveal a disjointed information ecosystem in which neither traditional nor social media systems are consistently aligned with truthfulness. The key challenge lies in establishing a robust and scalable framework that can effectively incentivize accuracy, transparency, and accountability in content creation and dissemination. Addressing this challenge is essential for restoring public trust and safeguarding the integrity of information in the digital age.

\subsection{Overview of the Proposed Framework}


We propose a set of incentive-driven design principles for addressing \gls{misinformation}: to establish trust in authentic and trustworthy content by allowing 1) content \gls{creators} to have skin in the game connected to the truthfulness of their content, 2) \gls{challengers} to \gls{contest} the veracity of that content, which also requires them to have skin in the game, and 3) \gls{jurors}, with financial and social incentives for high-quality efforts, to adjudicate these contents. This proposed framework will leverage modern standards \cite{c2pa} to provide jurors with detailed provenance information for digital assets, such as images and videos, to provide insight into the origins and integrity of evidence. All actors in this ecosystem (i.e., the content creators, challengers and jurors) will have verified digital identities. Each contest could be managed through transparent, immutable, and auditable smart contracts executed and paid out via a distributed ledger.

Specifically, the incentive framework is driven by the financial collateral that content creators are permitted to stake with their content as a \gls{veracity bond}\footnote{Charles Hoskinson, founder of Cardano, was an early advocate for the concept of a ``veracity bond'' \cite{charles-veracity-bond:2024}.} (VB), signaling confidence in the veracity of the content. If readers believe the content is inaccurate, they can become a challenger, providing evidence to substantiate their claim, and crucially, they must stake a \gls{counter-veracity bond} (CVB) of equal value the VB. The jury’s role is to determine whether the challenger's contest is valid or if the creator's content is true. The collateral of the losing party is then distributed to the winning parties as compensation.

Existing approaches using crowd-sourced annotations \cite{community-notes, birdwatch} have proven successful \cite{annotation-effectiveness}. However, despite their use, crowd-sourced annotations present a vulnerability as large groups with non-diverse political viewpoints may still propagate misinformation \cite{diversity-in-skeptical-views, bellingcat-taylor-swift}. Labeling and annotation of content is an important part of fact checking \cite{effectiveness-of-warning-labels}, however, the ideal approach should go beyond reactive measures, focusing instead on proactively addressing the incentives behind content creation \cite{annotation-effectiveness}. It has been demonstrated that readers can successfully discern high-quality and reputable content outlets from misleading ones \cite{pennycook-crowdsourced-solution} and that veracity bonds increase perceived credibility \cite{barbosa2025}. We suggest that harnessing the collective intelligence of the crowds with monetary compensation \cite{monetary-incentives} supported by game theory \cite{game-theoretic-incentives:2016, game-theoretic-short-long-term:2020, game-theoretic-large-stakes:2015, game-theoretic-mobile-networks:2012} will produce a net effect greater on the spread of misinformation online than crowd-sourced annotations alone.







\section{Rules and Mechanics}

\subsection{Roles and Responsibilities}

\subsubsection{Creators}

A fundamental challenge in information dissemination is incentivizing the creation and consumption of accurate content. One possible approach involves requiring creators to stake veracity bonds with a principal value of $\beta$ as a guarantee of the accuracy of their content, which is explored further in \S\ref{sec:content_and_bonds}. The underlying premise is that a substantial audience values reliable information, generating demand for creators that prioritize truthfulness.  

The financial risk associated with staking veracity bonds is primarily limited to the opportunity cost of having the principal locked during the staking period. However, this cost may be offset by the increased visibility and engagement afforded to content backed by significant bonds. Presumably, content with larger veracity bonds receives preferential treatment, such as greater visibility and prominence within the community, which can enhance the creator's reputation and standing in their field.  

In addition, a challenge mechanism may be incorporated, enabling users to dispute the veracity of staked content. The opportunity for challengers to profit from successful challenges or for creators to benefit from unsuccessful ones could provide an additional incentive for user participation, fostering an ecosystem where both creators and consumers actively contribute to information verification. By aligning incentives for accuracy and engagement, this framework aims to promote a more reliable and critically engaged information environment.

\subsubsection{Challengers} 

Challengers can dispute the accuracy of a creator’s content by compiling and presenting evidence to support their claim. To initiate a challenge, a challenger must stake a counter-veracity bond with the same principal value $\beta$ as the veracity bond of the creator. While multiple challenges may be brought against a single piece of content, each challenger is limited to one challenge per content. These challenges are processed sequentially in a randomly determined order (see \S\ref{sec:turn_based_game_structure} for further details).  

Challengers are responsible for providing sufficient evidence to demonstrate that the content is inaccurate. If the jury rules in favor of the challenger, they retain their bond and receive a portion of the creator’s bond, determined by the payout function $\pi_c$. Conversely, if the challenge is unsuccessful, $\pi_c$ is awarded to the creator. In both cases, the jurors are compensated according to the payout function $\pi_j$.



\subsubsection{Jurors}

A collective of independently selected jurors\footnote{To avoid conflicts of interest, neither the challenger nor creator may serve as a juror of their own contest.}, each verified through a robust digital identity process that adheres to the standards outlined in \S\ref{sec:digital-identity}, is responsible for reviewing and deliberating contest. The compensation of the juror, denoted by $\pi_j$, is a fraction of $\beta$ (see \cref{eq:beta_breakdown}), regardless of whether it is sourced from  the creator or challenger and should reflect the quality of the evaluation of the juror. 

Each juror is required to show active participation, including thorough examination of all evidence, clear justification of the perspectives, and casting votes that align with the available facts. Such actions ensure that compensation is tied to the substance and consistency of each juror’s evaluation, rather than superficial involvement. Although a logarithmic scale is one way to ensure that excessively high bonds do not overshadow smaller ones to maintain an approximate equal distribution of juror effort, the actual form of $\pi_j$ can be adapted to different contexts. In any implementation, $\pi_j$ should encourage broad participation, ensuring that jurors deliberate across a diverse set of contests, and reward quality and consistency, so that well-supported assessments lead to more favorable outcomes.

Every selected juror must post a refundable bond of $\gamma\beta$ ($0 < \gamma < 1$, with $\gamma$ serving only as an adjustable constant) before deliberation begins. The bond is returned in full when the juror (i) attends and votes within the deliberation window, (ii) submits a written assessment of the evidence, and (iii) later receives an average rating of at least ``neutral'' on the yes / neutral / no scale used in the anonymous juror-evaluation process described in \S\ref{sec:juror-eval}. If the juror is absent, abstains, or is predominantly rated ``no'', the bond is forfeited to a protocol reserve fund rather than being redistributed under \cref{eq:beta_breakdown}. A candidate who does not provide the bond is replaced immediately by an alternate juror who can. The bond neither affects selection probability nor reputation scoring; it simply extends financial accountability to the jury while preventing any transfer of value to disputing parties.



\subsubsection{Viewers}

Viewers are users who consume content without actively participating as creators, challengers, or jurors\footnote{Creators, challengers and jurors may also be viewers however for simplicity, this paper assumed these roles to be distinct.}. Their role is crucial for driving engagement and shaping the overall value of the ecosystem. By facilitating access to truthful content, viewers are empowered to make well-informed decisions and engage more meaningfully with the information they consume.

\subsection{Rules Governing Interactions}
\subsubsection{Content Creation and Veracity Bonds}
\label{sec:content_and_bonds}

A veracity bond is a form of financial collateral that creators can place on their content to demonstrate confidence in its accuracy \cite{barbosa2025}. By depositing a substantial bond, creators signal a strong belief in the veracity of their work. Conversely, withholding a bond may reflect lower confidence or an inability to provide immediate proof of accuracy \cite{swedish-journalists:2019}. As illustrated in \autoref{fig:sys-creator}, creators can deposit a veracity bond immediately after writing the content.

\begin{figure}[h]
    \centering
    \includegraphics[width=0.4\textwidth]{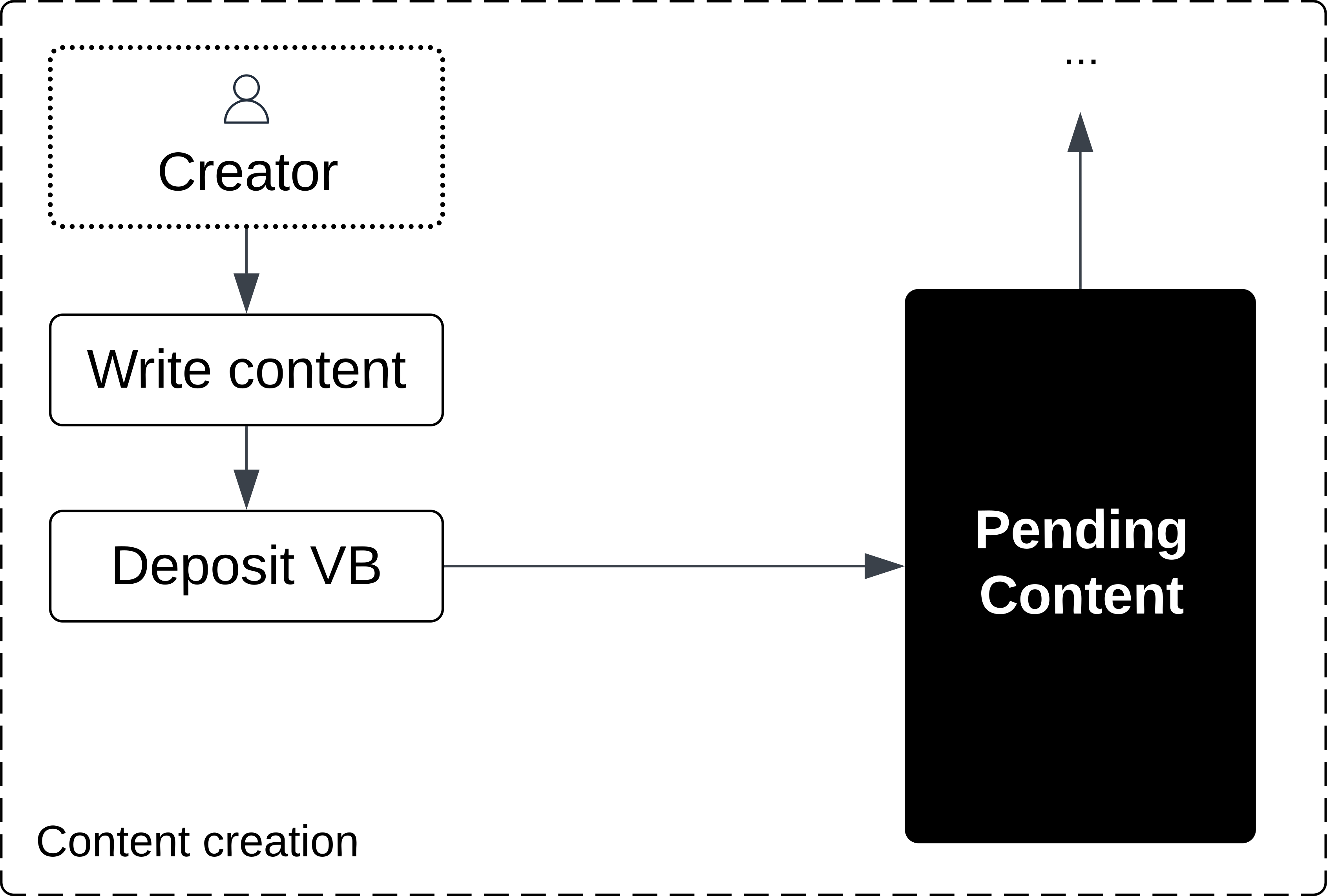}
    \caption{Creators can deposit a veracity bond immediately after writing the content.}
    \label{fig:sys-creator}
\end{figure}

If the content is found to be misleading, the bond is forfeited and redistributed to the challenger who contested and the jurors who deliberated. This forfeiture occurs only if the jury verdict rules against the creator, which treats the bonded content as a formal legal contract. Recipients should receive payment through a blockchain-based smart contract. This smart contract ensures secure and immutable transfers of funds, reinforcing trust by eliminating the need for a centralized authority \cite{nakamoto2008,eth-paper,trust-in-smart-contracts}. If no successful disputes arise during the \gls{challenge period}, the original creator is refunded the value of their bond $\beta$ in full.

Not all creators have the resources to conduct extensive fact-checking, particularly freelance journalists whose work may not undergo verification by large organizations \cite{freelance-journalists:2016}. Therefore, any creator with a verified identity is eligible to deposit a bond on their content. Encouraging participation from a diverse range of creators enhances the overall information ecosystem by expanding the availability of verified news and perspectives online \cite{diverse-perspectives}.

\subsubsection{Challenging Content and Counter Veracity Bonds}

Challenging content requires the challenger to post a counter-veracity bond that matches the value of the original veracity bond $\beta$ as illustrated in \autoref{fig:sys-challenger}. The equal stake ensures symmetrical incentives for both parties, preventing the jury from being skewed in a direction that prioritizes monetary gain over the factual integrity of content. By requiring equal bonds from creators and challengers, the approach aims to maximize fairness and impartiality.

\begin{figure}[h]
    \centering
    \includegraphics[width=0.4\textwidth]{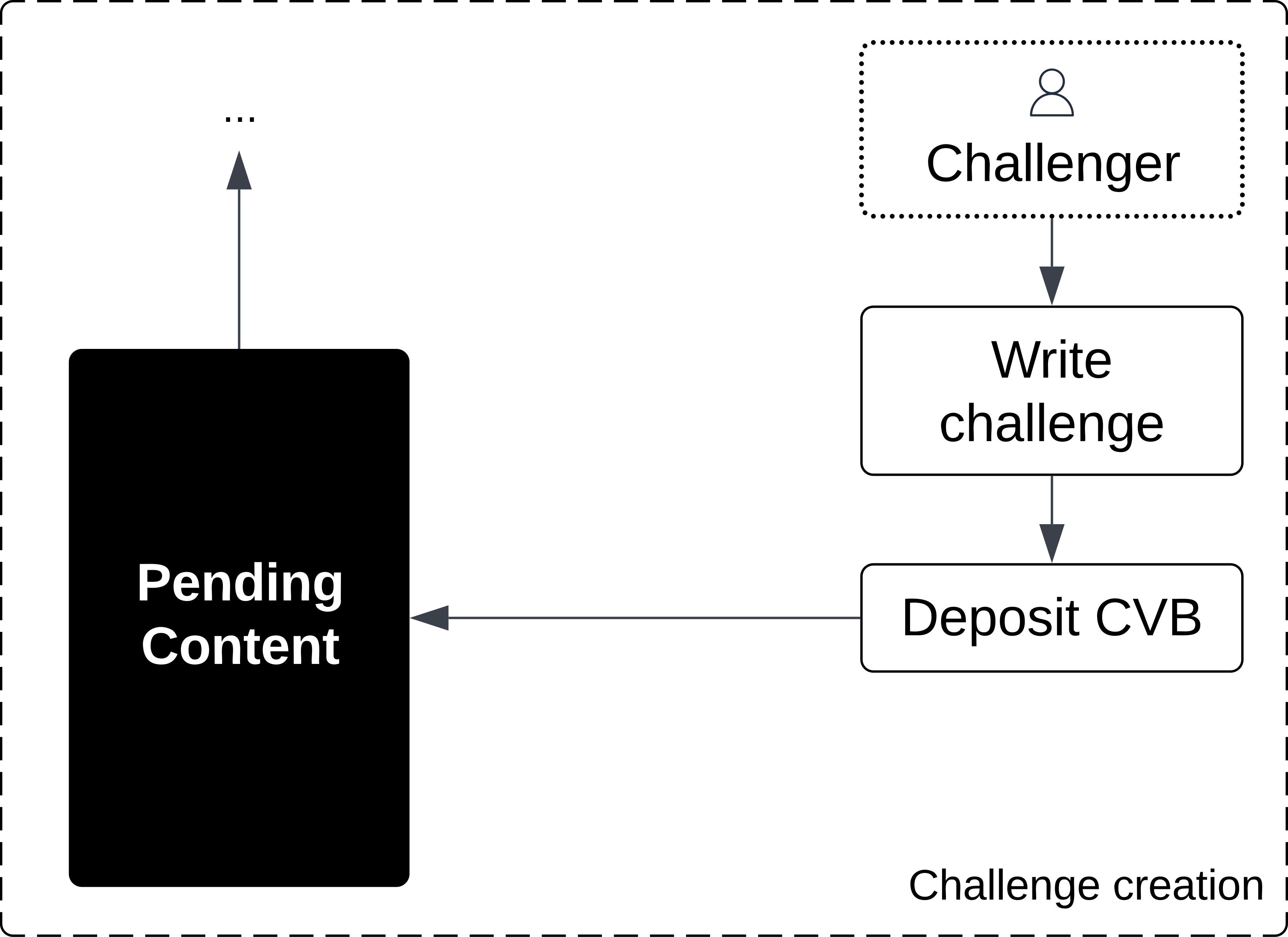}
    \caption{Challengers can deposit a counter-veracity bond immediately after writing the challenge.}
    \label{fig:sys-challenger}
\end{figure}

The requirement for a counter-veracity bond also discourages frivolous or malicious challenges. Those initiating a challenge must be sufficiently confident in their contest to risk losing the bond. This mechanism promotes thorough fact-checking and reduces exploitation attempts. In addition, each challenger can submit only one challenge per distinct unit of content. If a challenger loses, they forfeit the CVB, and the next challenger in the queue is randomly picked. This sequence continues until either a challenger succeeds or the challenge period ends.

In extreme cases, challengers with considerable resources might attempt to monopolize the challenge process by submitting multiple challenges across different content, effectively creating a backlog that delays genuine disputes. To address this, safeguards can be implemented, such as limiting the number of active challenges that a single party can initiate at a given time and using a randomized challenger queue (discussed further in \S\ref{sec:turn_based_game_structure}) policies to prevent large-scale manipulation and ensure timely dispute resolution for all parties.

\subsubsection{Jury Resolutions}

Resolutions are a central part of the framework, providing a clear outcome to each dispute while allowing for future reassessment as human knowledge advances. Although resolutions offer closure at the time they are rendered, they remain subject to reconsideration whenever new evidence or insights emerge.\footnote{Consider the evolving understanding of health recommendations regarding fats. Decades ago, dietary guidelines commonly advocated avoiding all forms of fat to maintain heart health. However, with advances in nutritional science, it is now widely accepted that unsaturated fats, such as those found in olive oil and avocados, are beneficial for cardiovascular health \cite{pacheco2022avocado}.}

To reach a resolution, a jury of $n$ members is selected from a pool of $N$ eligible jurors to reach a majority consensus within the \gls{deliberation period}. An odd-numbered jury (where $n = 2m+1 \ll N$ and $n, m, N \in \mathbb N$) avoids the possibility of a 50/50 deadlock. The jury’s verdict either supports the creator, validating the content’s accuracy, or favors the challenger, indicating that the content has been successfully disputed.\footnote{This framework could alternatively use an even-numbered jury, allowing for an indeterminate result in which both the VB and CVB are returned to their respective parties.}

Once the verdict is finalized, the forfeited bond (whether VB or CVB) is distributed to the jurors and the winning party, as shown in \autoref{fig:sys-full-flow}. This process ensures that participants are compensated for their time and diligence in evaluating the dispute.

\begin{figure}[H]
    \centering
    \includegraphics[width=0.8\textwidth]{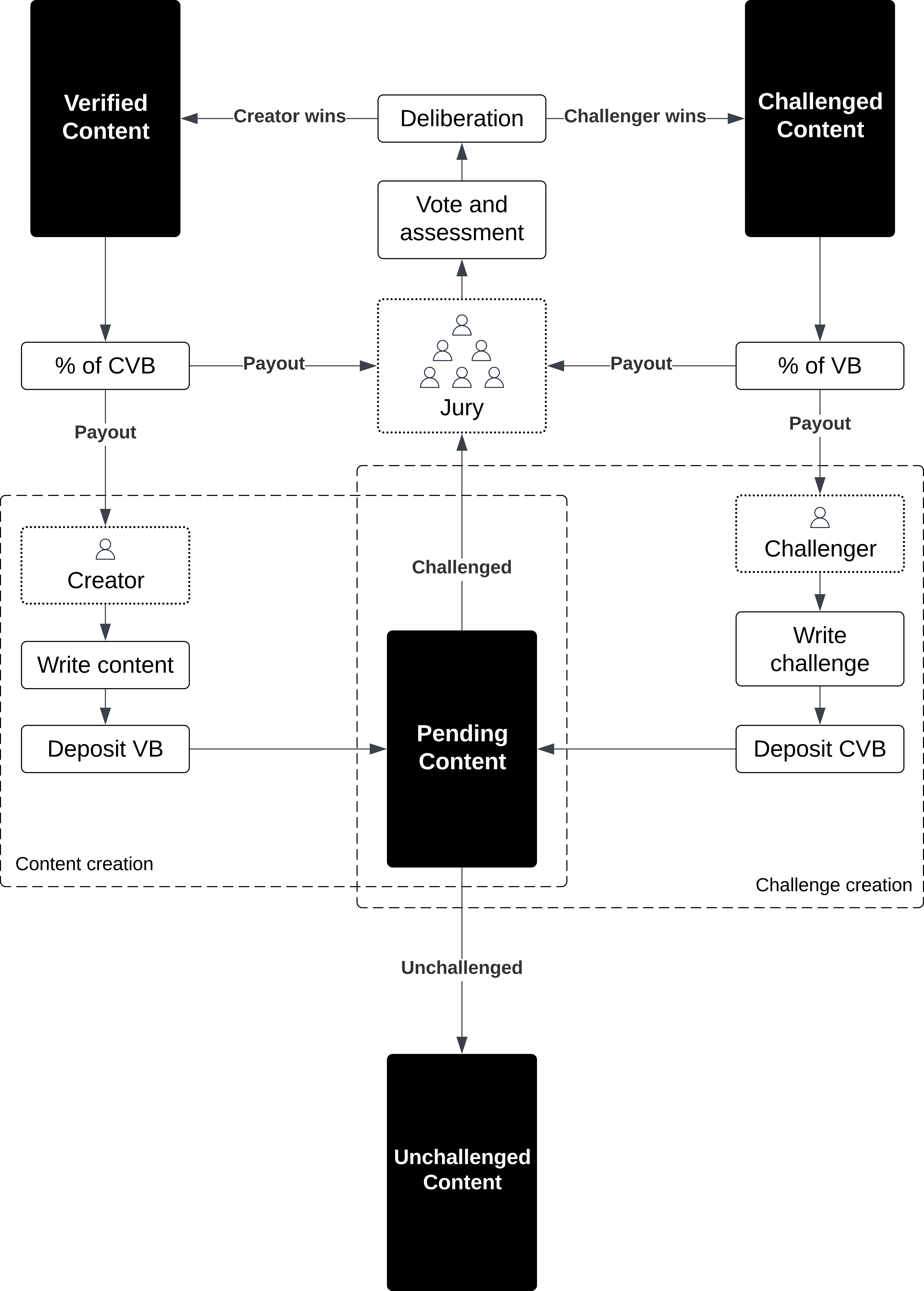}
    \caption{A fixed percentage of the losing bond (VB or CVB) is distributed to the jury and the winning party.}
    \label{fig:sys-full-flow}
\end{figure}

Jurors would ideally participate by reviewing evidence, providing detailed assessments, and casting informed votes within the deliberation period. Should a juror fail to fulfill these responsibilities, they are deemed inactive and must be promptly replaced by the next best available candidate from the pre-established bench of jurors. This inactivity adversely affects their reputation, underscoring the importance of maintaining an excess pool of $N$ qualified jurors. The surplus ensures that any inactive jurors can be seamlessly substituted without compromising the overall quality or timeliness of the deliberation process.

\subsubsection{Juror Evaluations}
\label{sec:juror-eval}

Jurors have considerable influence as they are selected for their subject matter expertise and may receive compensation for resolving disputes. To preserve the integrity of the framework, it is crucial to establish mechanisms that identify and remove jurors who exhibit ineffectiveness or malicious intent. Consequently, a process of random and anonymous evaluations is proposed, allowing select viewers who did not participate in the specific contest to assess a juror’s performance as illustrated in \autoref{fig:sys-juror-eval}. The randomization and anonymity of the evaluations will mitigate the risk of collusion and bias and encourage the evaluators to concentrate on the substance of the juror’s independent assessment.

\begin{figure}[h]
    \centering
    \includegraphics[width=1.0\textwidth]{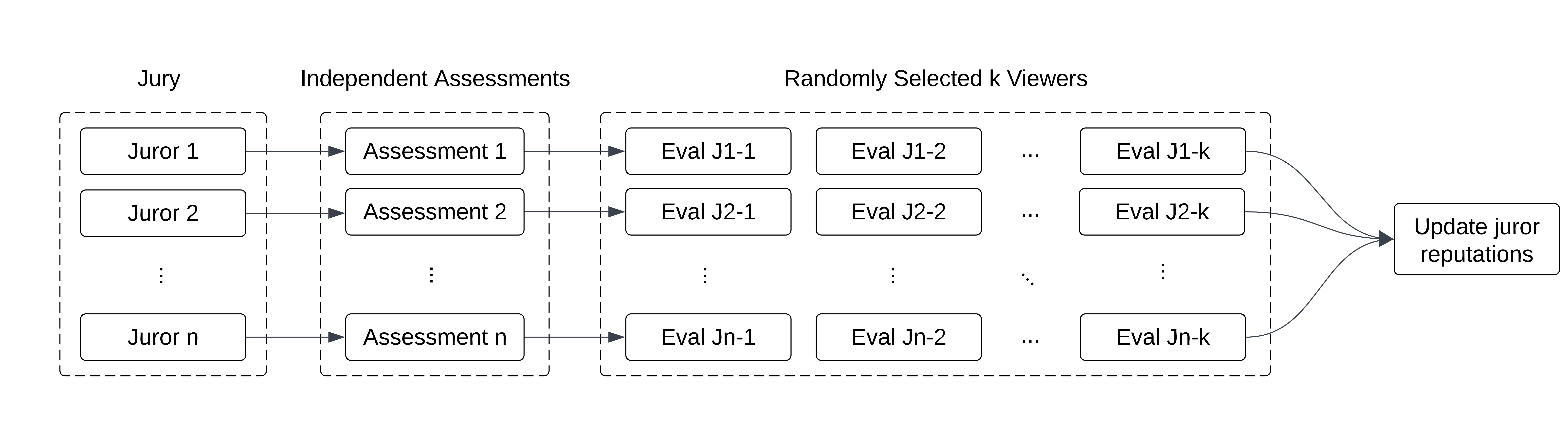}
    \caption{Juror evaluations are randomly assigned to a selected number of viewers who do not have a conflict of interest with that contest.}
    \label{fig:sys-juror-eval}
\end{figure}

Jurors are required to compose detailed assessments explaining their verdicts, outlining the evidence considered, the relative weight assigned to each piece of evidence, and the reasoning behind their conclusions. These comprehensive notes ensure transparency in the jury’s decision-making process. The reviewers can then assess these contributions using a three-point Likert scale to indicate whether the juror's input is ``no'', ``neutral'' or ``yes'' in terms of quality and thoroughness. Jurors who consistently receive lower ratings could experience a decline in reputation, potentially reducing their likelihood of being selected for future juries or, in some cases, leading to disqualification. This framework aims to promote accountability, incentivize rigorous evaluations, and support the delivery of impartial, well-substantiated verdicts across the board.

\subsubsection{Jury Collusion-Resistance}

Collusion among a small subset of jurors poses a potential threat to the integrity of any verdict. Understanding and controlling this risk is essential for any approach that relies on collective judgment. The analysis that follows shows how the likelihood of a coordinated bloc overturning a verdict shrinks quickly as the jury grows, while \S\ref{app:collusion_proof} supplies the complete mathematical proof.

Assume a total pool of $N$ potential jurors, of whom exactly $k$ are willing to coordinate their votes. Each trial draws a jury of $n = 2m + 1$ members; using an odd number eliminates the possibility of a tie. A decision is reached when at least $t = m+1$ jurors vote the same way. Because the jury is sampled without replacement, the number $X$ of coordinating jurors selected follows a $\mathrm{Hypergeometric}(N,k,n)$ distribution, and applying Hoeffding’s inequality \cite{hoeffding1963, chavatal1979} to this variable yields the exponential upper-tail bound,
\begin{equation*}
  P \bigl(X\ge t\bigr)\;\le\;\exp\Bigl[-2n\bigl(\tfrac12-p\bigr)^2\Bigr].
  \label{eq:chernoff}
\end{equation*}

The exponent is negative whenever the required majority $t$ exceeds the expected number of colluders $np$. As $n$ increases, the magnitude of that exponent grows linearly, so the probability of a successful collusion attempt falls exponentially.

To illustrate, suppose $10\%$ of the overall pool is malicious. With a 21 jury the bound above already drops below $0.2\%$. Expanding the jury to 43 members presses the figure below $0.00004\%$. Plots in \autoref{fig:juror-collusion-plots} depict this decline for several choices of $N$; the underlying calculations and tables are given in \S\ref{app:collusion_tables}.

\begin{figure}[h]
    \centering
    \includegraphics[width=1.0\textwidth]{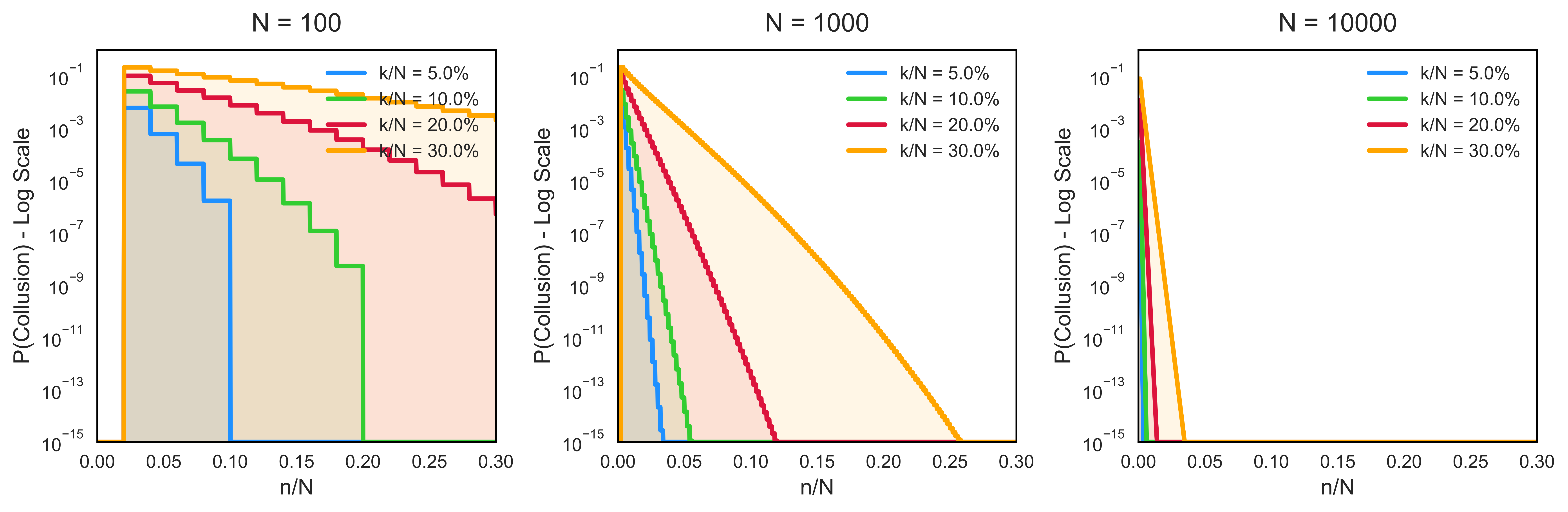}
    \caption{Collusion probability (on a log‐scale) as a function of jury selection ratio ($n/N$) for varying population sizes and corruption levels. The three panels show results for populations of $N=100$, $1\,000$, and $10\,000$ individuals, each with different percentages of unfaithful jurors ($k/N=5\%$, $10\%$, $20\%$, $30\%$).}
    \label{fig:juror-collusion-plots}
\end{figure}

\subsection{Turn-based Game Structure}
\label{sec:turn_based_game_structure}

Once the content is in place and supported by a veracity bond, the framework can be modeled as a turn-based game, which simplifies the application of game-theoretic analysis. The fundamental idea is that each \textit{turn} corresponds to a challenger stepping forward with a counter-veracity bond and initiating a contest over the content’s accuracy. If no challenger emerges within the allotted period (the challenge period), the content is considered unchallenged and the veracity bond is fully returned to the creator, as described below and illustrated in \autoref{fig:sys-full-flow}.

\subsubsection{Sequential Challenge Model}

In an asynchronous challenge model, multiple challengers can simultaneously initiate contests against a creator; however, only one can ultimately succeed. Allowing multiple challengers to win would undermine the economic structure described in \cref{eq:beta_breakdown}. Moreover, managing jury allocations for concurrent contests, while preserving the independence of each proceeding, adds significant complexity.

In contrast, the serial model enforces a more streamlined process by allowing only one challenger at a time to contest the creator, with any additional challengers placed into a randomized queue as shown in \autoref{fig:turn-based}. If the current challenge succeeds, all remaining queued challenges are dismissed and the contest is concluded. If the challenge fails, the next challenger in the queue proceeds. This randomized queue can help prevent resource-rich challengers from monopolizing the challenge process.

\begin{figure}[h]
    \centering
    \includegraphics[width=0.8\textwidth]{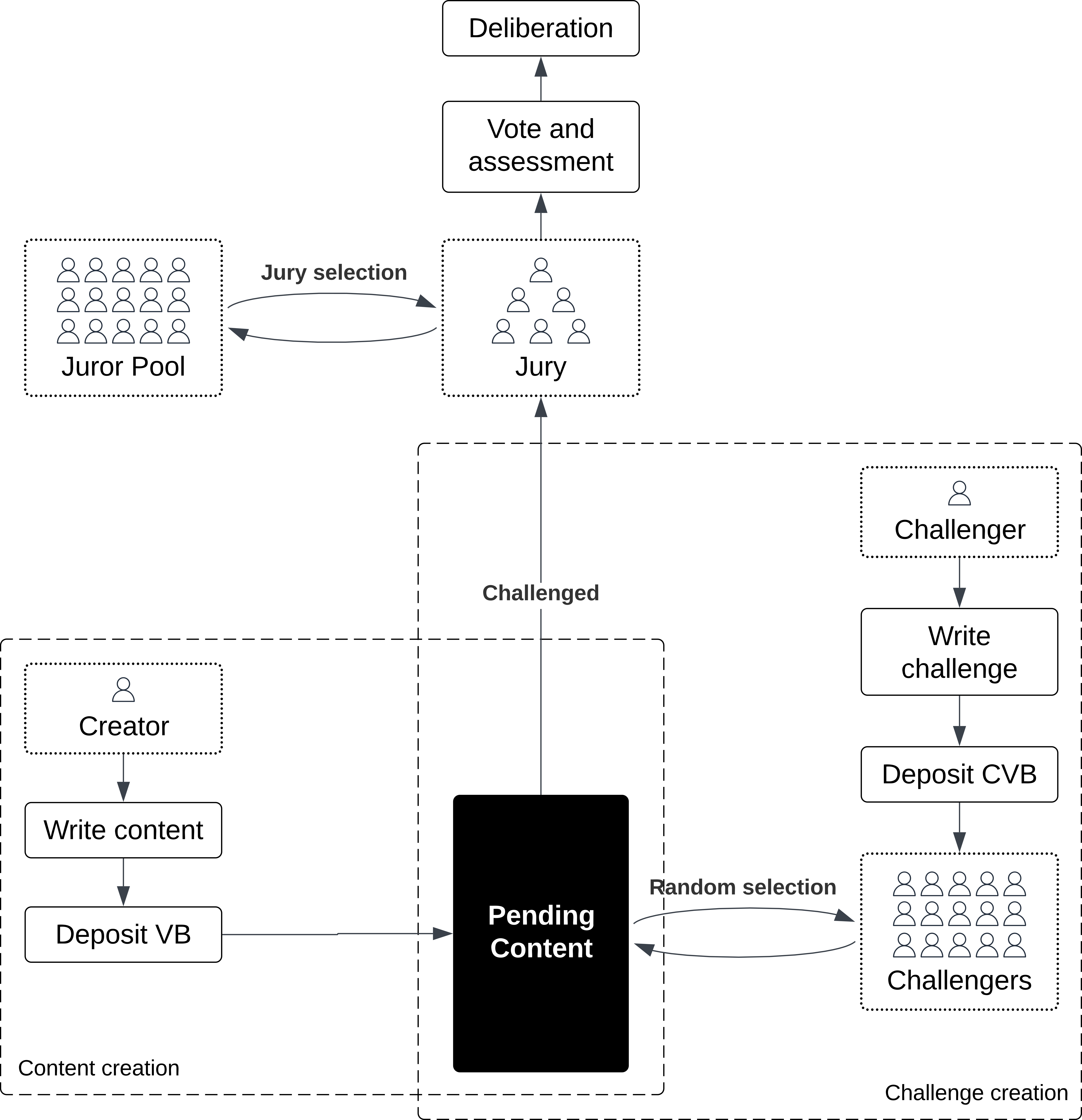}
    \caption{Adapted from \autoref{fig:sys-full-flow} where during game turns, if a challenger loses their contest, a new one will be selected at random. Similarly, if any jurors are inactive during the deliberation period they will be substituted by the next qualified candidate.}
    \label{fig:turn-based}
\end{figure}

\subsubsection{Unchallenged Content}

When the challenge period ends without any contest, the entry is classed as unchallenged and the creator’s veracity bond is returned in full. The absence of a dispute, however, is not a definitive proof of correctness. It can signal at least three distinct situations: (i) the claim is self-evidently accurate so no one sees value in challenging it, (ii) the claim is so trivial or uninteresting that potential challengers do not invest their time, or (iii) the audience that could evaluate the claim has not yet encountered it. Further consideration is needed to determine how each of these possibilities should shape subsequent audits, reputation adjustments, and archival practice.

\section{Incentive Model}

\epigraph{\textit{Show me the incentives, and I'll show you the outcome}.}{Charlie Munger}

\subsection{Payout Structure}

The incentive framework is grounded in the distribution of a forfeited veracity bond or a counter-veracity bond as shown in \autoref{fig:sys-full-flow}. When a contest is finalized, the losing party's $\beta$ is forfeited and redistributed to the winning party, jurors and framework in the following fashion,
\begin{equation}
    \pi_c + \sum_{j \in J} \pi_j + \pi_p = \beta,
    \label{eq:beta_breakdown}
\end{equation}
where $\pi_c$ is the payout awarded to the successful creator or challenger, $\sum_{j \in J} \pi_j$ represents the collective compensation to all jurors who meaningfully participated and $\pi_p$ corresponds to the share of the framework. This closed-loop mechanism underscores the self-sustaining nature of the model: the rewards for accurate assessments are funded exclusively by penalties for inaccurate content, thereby reducing reliance on external subsidies.

From an economic standpoint, this structure aligns the interests of creators, challengers and jurors with the overarching objective of promoting accurate content. Creators who are confident in the veracity of their content are incentivized to stake a veracity bond, potentially realize profits from failed challenges while simultaneously enhancing their reputation through the clear demonstration of trust in their content’s authenticity. Challengers are motivated to dispute potentially misleading material, as they receive a share of the forfeited bond only upon a successful challenge. Jurors may have incentives to issue accurate and high-quality judgments, as erroneous, biased, or low-quality judgments could affect their future opportunities. The framework's fixed percentage of the forfeited bond, $\pi_p$ , finances its operational expenses, such as infrastructure, dispute-resolution protocols, and identity verification. Because the fee is a fixed percentage that is publicly disclosed, everyone can verify that the framework’s total earnings grow only with overall activity, not with hidden fee increases at users’ expense.

A salient feature of this architecture is that the payout structure in \cref{eq:beta_breakdown} does not rely on external funding. Instead, the approach is self-sustaining, with rewards for accurate assessments funded entirely by the penalties imposed on inaccurate content. This closed-loop mechanism reinforces economic incentives while minimizing dependency on external subsidies. The design inherently introduces risk for participants: a creator who publishes a demonstrably true claim and faces no challenge loses nothing, whereas a challenger who mistakenly disputes a true claim stands to lose their counter-veracity bond. Similarly, jurors who make poor-quality biased judgments forfeit potential earnings. These risks are integral to the incentive structure of the framework, ensuring that all parties are motivated to contribute to the overall accuracy and integrity of the content.

\subsection{Reputation}


One significant difficulty in developing a truth-maximizing framework is the imbalance between jurors and challengers. Since voting or providing an independent assessment requires far less effort than preparing a well-researched challenge, the juror pool is likely to outnumber challengers by a wide margin. This gap makes it essential to implement mechanisms that distinguish and reward jurors who deliver rigorous, accurate evaluations, so that low-effort or erroneous judgments do not overwhelm the system. Ultimately, the framework must align incentives to promote high-quality juror contributions, rather than simply maximizing participation

To address this challenge, the framework incorporates an internal reputation mechanism designed to promote prosocial behavior, namely accurate and thorough assessments, while discouraging actions driven solely by self-interest or greed. This reputation mechanism filters the pool of jurors by prioritizing individuals who consistently contribute to accurate content evaluation.

The theoretical foundation for this approach is drawn from the work of \textcite{bnabou2006} on incentives and prosocial behavior, which provides a framework to understand how individuals balance intrinsic motivations, extrinsic monetary rewards, and reputational concerns. The framework models juror behavior by choosing a participation level $a$, drawn from a set of possible actions $A \subset \mathbb{R}$, which can encompass factors such as the thoroughness of the written assessment rated by randomized reviewers, as explained in \S\ref{sec:juror-eval}.

The overall benefit a juror derives from participation can be represented as 
\begin{equation*}
    (v_a + v_y y) a -C(a),
\end{equation*}
where $y$ represents the monetary incentive (that is, the potential payment of the veracity bond), $v_a$ reflects the intrinsic value derived from prosocial participation (such as a sense of civic duty or intellectual satisfaction), $v_y$ indicates the intrinsic value of the monetary reward and $C(a)$ represents the cost associated with the chosen level of participation. A well-designed reputation mechanism effectively increases the perceived cost of low-effort actions, thereby aligning individual incentives with collective accuracy.

Participation should generate a reputational payoff, $R$, which is an essential component of the incentive structure,
\begin{equation}
    R(a,y) = x\left(\gamma_a E(v_a \mid a, y) - \gamma_y E(v_y \mid a, y)\right).
\label{eq:rep_model}
\end{equation}

In \cref{eq:rep_model}, $x$ represents the visibility of the juror's actions within the community, while $\gamma_a$ and $\gamma_y$ capture the relative weights a juror places on prosocial reputation and monetary reward, respectively. The terms $E(v_a \mid a, y)$ and $E(v_y \mid a,y)$ denote the expected intrinsic value of the juror from prosocial participation and the expected valuation of the monetary reward, given their chosen action $a$ and the incentive $y$. A higher $R$ score indicates that a juror is more strongly perceived to be motivated by prosocial concerns than financial gain.

Jurors whose $R$ scores exceed a predefined threshold may be selected, with selection and ranking estimating juror quality while accounting for uncertainty in limited observations. A dynamic ranking mechanism continuously updates evaluations as new data emerge, identifying underperforming jurors and highlighting those who consistently provide high-quality assessments.

In cases where jurors' $R$ scores fall below average, one possibility is that they may not be selected for future juries, creating a link between reputation and reward. By incorporating reputation into the incentive structure and promoting prosocial behavior rather than monetary motivations, this approach could encourage jurors to provide accurate and reliable evaluations, contributing to the overall effectiveness of the community-driven ecosystem.

\subsection{Visibility}

One possible approach is to extend the incentive structure beyond individual participants by linking the visibility of content to the size of its associated veracity bond. Under this design, verified content backed by larger bonds could receive preferential treatment within the ecosystem. For example, the bond size might serve as a weighting factor in ranking algorithms that determine search results, content with larger bonds might display prominent visual indicators to signal its verification status, and recommendation engines could prioritize such content for users with related interests.

The rationale for this approach is that a larger veracity bond can serve as a credible indicator of the creator’s confidence, since the financial risk is higher if the claim is found to be false. This additional risk may motivate creators to engage in more rigorous fact-checking before publication. In turn, larger bonds are likely to attract more focused scrutiny from challengers and jurors, helping ensure that high-visibility content is thoroughly vetted. A potential concern is that this approach could devolve into a ``pay-to-win'' scenario, but in principle, creators are not purchasing visibility; they are staking a bond that can be forfeited if their claims are proven false. The heightened scrutiny by an incentivized pool of challengers and reputable jurors may act as a deterrent against misinformation, regardless of bond size. Furthermore, mechanisms such as the ones described in \S\ref{sec:digital-identity} could be introduced to detect and penalize efforts aimed at artificially inflating bond sizes through collusion or other manipulative behaviors.

This visibility mechanism is designed to function as part of a broader incentive structure, working in tandem with the challenger and juror subsystems to promote accountability and factual accuracy within the community.

\section{Digital Identity}
\label{sec:digital-identity}

A critical vulnerability in any online system that relies on user participation, particularly one involving voting or reputation, is the potential for Sybil attacks \cite{sybil:2002}.  A Sybil attack occurs when a single malicious entity creates multiple fake accounts (often referred to as ``sock puppets'' or ``bots'') to manipulate the system, for example, by casting multiple votes, artificially inflating reputation scores, or flooding the system with dis/misinformation.  In the context of designing a robust truth maximizing framework, preventing Sybil attacks is crucial to maintain the juror pool’s integrity and ensure contest reliability. If malicious actors dominate the juror pool, the entire incentive structure can be undermined.

Achieving Sybil resistance may benefit from a multilayered approach that combines basic verification techniques with advanced digital identity solutions. At the foundational level, measures such as the verification of email and phone numbers, the implementation of CAPTCHA, the monitoring of IP addresses, and the fingerprint of devices serve to deter automated account creation and reduce the risk of fraudulent activity. Although these techniques provide an essential baseline, more robust methods may be needed to counter determined attackers. Enhanced measures could include voluntary government-issued ID verification and linking to established social media accounts, which might offer stronger guarantees of user uniqueness while balancing privacy concerns.

For long-term scalability and privacy preservation, decentralized identity (DID) solutions present a promising avenue to verify the sovereignty of personhood \cite{did}. DIDs, based on the global W3C standard \cite{w3c:did-2022}, enable users to control their own identifiers and associated data, thus reducing dependence on centralized authorities. Requiring new identities to be issued only from a trusted \textit{issuer} and verified by a different trusted entity \textit{verifier} creates substantial friction to prevent malicious automated accounts from undermining the system \cite{microsoft-did}. This use of DIDs remains compatible with the provenance standards \cite{c2pa} explored in \S\ref{sec:provenance} as they are agnostic to identity data as long as it is representable via a W3C Verifiable Credential \cite{w3c:creds-2019}, that includes DIDs. On a global scale, decentralized identifiers offer distinct advantages over government-issued identification, which is generally limited by jurisdiction. Because DIDs adhere to a global standard that ensures verifiability across multiple regions, they are particularly well suited for content platforms serving an international audience.








    

\section{Content Provenance}
\label{sec:provenance}

Digital evidence (e.g., images, audio, video, or files) is essential in providing tangible support for claims about the origin, authenticity, and integrity of content \cite{taxonomy-of-trust}. In this context, content creators and challengers hold the responsibility of providing evidence when appropriate. Creators are encouraged to include evidence with verifiable provenance data to enhance credibility. In turn, challengers should rigorously assess this information and equally provide provenance data to any evidence supporting counterclaims. Jurors then rely on this traceable history of entities, activities, and individuals to evaluate the authenticity and integrity of the evidence presented. Adherence to emerging standards \cite{c2pa, w3c:prov} for embedding provenance metadata in digital assets and promotes truthful reporting and may help deter the distribution of manipulated or fabricated material, such as deep fakes.

\section{Future Work}

\subsection{Challenges and Limitations}

\textbf{Content creator incentives}. While the potential for financial rewards for verified content is attractive, it may not be sufficient to compete with established free platforms. The framework's success depends on active participation in both content creation and the verification process. While this model is not designed to build an entirely new media platform, its true value will be realized only when integrated into existing ecosystems that adopt the protocol. Future work must explore strategies to enhance creator engagement without compromising the model’s integrity.

\textbf{Scalability}. A robust incentive model for content trust relies on a large and engaged community of fact-checkers and jurors. As content volume increases, efficiently managing the verification process becomes more complex. Without effective scalability measures, there is a risk of unverified content slipping through as attention decays in fast-paced news cycles. Furthermore, the model's dependence on user participation exposes it to Sybil attacks, where malicious actors create multiple fake accounts to manipulate the results \cite{sybil:2002}. Implementing strong digital identity verification and refined jury selection mechanisms is critical for sustaining trust at scale.

\textbf{Subjective and evolving nature of truth}. Truth is not static, and new evidence or shifts in societal understanding may necessitate revisiting past verifications. The incentive model must therefore incorporate flexible mechanisms that allow for appeals, re-evaluations, and content updates. Establishing clear guidelines and a transparent process for revising content verification outcomes is essential. This adaptability not only supports ongoing trust but also aligns with the dynamic nature of public discourse.

\textbf{Expert jurors}. Certain content niches require specialized knowledge that only experts can provide. Integrating expert jurors into the verification process can capture nuanced insights that are otherwise overlooked. However, identifying qualified experts and ensuring their selection remains unbiased is a complex challenge. Future work should focus on developing robust verification methods for expert credentials and establishing transparent selection processes. Balancing expert input with community-driven insights will be key to preserving the model’s integrity and trustworthiness.

\textbf{Concentration of powers}. There is a risk that the incentive model may inadvertently favor well-resourced creators, who could more easily afford the cost of producing verified content. This dynamic might exacerbate existing inequalities in information access. To foster a more inclusive ecosystem, further research is needed to find the optimal balance between content quality and accessibility. Policies and incentives should be designed to democratize participation, ensuring that the trust model benefits a diverse range of voices.

\textbf{Bias}. Reliance on community consensus for verification raises concerns about the potential for dominant narratives to suppress alternative viewpoints. Mitigating such biases is crucial for maintaining a fair and balanced trust system. Future research should explore methods such as diversifying the jury pool, weighting votes based on verified user expertise or credibility, and incorporating algorithmic checks. These measures will help safeguard against echo chambers and ensure that the incentives for trust are distributed equitably.

\subsection{Open Questions}

A series of open questions emerge from this exploration. Could the proposed incentive model catalyze a new market that effectively prices truth, thereby incentivizing content creators to produce high-quality, truthful content while providing consumers with reliably verified information? How can the framework be designed to prevent collusion among jurors, ensuring that coordinated voting does not compromise the fairness and objectivity of the verification process? What is the optimal balance between speed and rigor in determining the durations of the challenge and deliberation periods, such that the process remains relevant in fast-moving news cycles while still allowing for a thorough evaluation of complex issues? Additionally, what should be the ideal size of a jury to promote diverse perspectives and efficient decision-making, and what level of anonymity is appropriate for challengers and jurors to maintain impartiality without sacrificing accountability? How can the selection and integration of expert witnesses be optimized to ensure that the most qualified individuals are assigned to relevant cases, while simultaneously mitigating potential biases within the framework? Furthermore, how should the framework handle content that remains unchallenged, ensuring that the absence of a challenge does not unduly convey credibility without sufficient scrutiny? Finally, could the incentive model be adapted effectively to other domains, such as streamlining the academic peer-review process or expediting claim assessments in the insurance sector, and what modifications would be necessary to suit these different contexts? These questions remain open to exploration and will be the focus of our future work.

\section{Conclusion}

This paper has proposed an incentive-aligned framework in which content creators, challengers, and jurors collectively approximate truth at scale. By directing both expert knowledge and crowd wisdom toward relevant content and linking participation to monetary rewards for accuracy, the model aims to mitigate the growing impact of dis/misinformation. In a context shaped by synthetic media and declining confidence in traditional information sources, such a mechanism may help rebuild trust and enhance society’s capacity to distinguish reliable information from falsehoods. Continued research and iterative development in this area is encouraged.

\newpage
\printglossaries
\newpage
\printbibliography[heading=bibintoc]
\newpage

\appendix

\section{Collusion-Resistance Guarantee}

\subsection{Formal Proof}
\label{app:collusion_proof}

\begin{theorem}[Collusion-Resistance Guarantee]\label{thm:collusion}
Let a pool of $N$ jurors contain exactly $k$ colluders, and draw uniformly at random and without replacement an odd panel of size $n = 2m+1$ with $n \ll N$ and $m \in \mathbb{N}$. The panel returns the majority vote once at least $m+1$ members agree. Then, provided $k$ is fixed, the probability the colluders sway the verdict is bounded above by $e^{-\Omega(n)}$ where $\Omega > 0$, decaying exponentially in $n$.
\end{theorem}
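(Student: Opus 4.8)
The plan is to reduce the event ``the colluding bloc controls the verdict'' to an upper-tail event for a single hypergeometric random variable and then invoke the exponential concentration bound already quoted in the collusion discussion. First I would fix notation: let $X$ count the colluders that land on the drawn panel. Sampling $n$ jurors without replacement from a pool of $N$ containing $k$ colluders gives $X \sim \mathrm{Hypergeometric}(N,k,n)$ with mean $\mathbb{E}[X] = np$ and $p = k/N$. Assuming the honest jurors vote according to the facts, a tie-free panel can be swung only when the colluders themselves form a majority, i.e. only on the event $\{X \ge m+1\}$; hence $P(\text{colluders sway the verdict}) \le P(X \ge m+1)$.

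Next I would observe that $m+1 = \tfrac{n+1}{2} \ge \tfrac{n}{2}$, so writing the threshold as $\tfrac{n}{2} = (p+t)n$ with deviation $t := \tfrac12 - p$ gives $P(X \ge m+1) \le P\bigl(X \ge (p+t)n\bigr)$. Because $k$ and $N$ are held fixed with $k < N/2$ (automatic once $n \ll N$ forces $N$ to dominate the fixed $k$), the quantity $t = \tfrac12 - \tfrac{k}{N}$ is a strictly positive constant independent of $n$. Applying Hoeffding's inequality for sampling without replacement --- equivalently the Chvátal bound for the hypergeometric upper tail \cite{hoeffding1963, chavatal1979} --- yields $P\bigl(X \ge (p+t)n\bigr) \le \exp(-2nt^2)$, the inequality displayed earlier. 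Chaining these estimates gives $P(\text{colluders sway the verdict}) \le \exp\!\bigl[-2n\bigl(\tfrac12 - \tfrac{k}{N}\bigr)^2\bigr] = e^{-\Omega(n)}$ with implied constant $\Omega = 2\bigl(\tfrac12 - k/N\bigr)^2 > 0$, which is the claim.

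The main obstacle --- really the only step that is not bookkeeping --- is justifying that the hypergeometric upper tail obeys the same sub-Gaussian bound as the binomial one, since $X$ is a sum of negatively associated rather than independent indicators. I would handle this by appealing to Hoeffding's classical reduction: a without-replacement sum is dominated in the convex order by the corresponding with-replacement (binomial) sum, so every Chernoff-type bound transfers verbatim; Chvátal's elementary combinatorial argument gives the same inequality directly and can be cited as an alternative. A secondary point worth making explicit is the reduction in the first paragraph: one should confirm that no strictly smaller bloc can flip the outcome, which is precisely why the honest jurors are assumed to report their true signal --- relaxing that assumption would introduce a union-bound term for honest error but would not change the exponential rate in $n$.
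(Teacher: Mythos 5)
Your proposal is correct and follows essentially the same route as the paper's own proof: model the number of selected colluders as $X \sim \mathrm{Hypergeometric}(N,k,n)$, reduce the collusion event to the tail event $\{X \ge m+1\}$, and apply the Hoeffding--Chv\'atal bound for sampling without replacement with deviation $\tfrac12 - k/N$ to obtain $\exp\bigl[-2n(\tfrac12 - k/N)^2\bigr]$. The only cosmetic difference is that the paper keeps the exact gap $\delta = (\tfrac12 - p) + \tfrac{1}{2n}$ before relaxing it to $\tfrac12 - p$, whereas you relax the threshold to $n/2$ up front; both yield the identical constant $\Omega = 2(\tfrac12 - p)^2$.
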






\begin{proof}
Let \(k\) colluders be fixed in a pool of \(N\) jurors and set
\(p = k/N < \tfrac12\).
Draw an odd panel of size \(n = 2m+1\) uniformly at random
\emph{without} replacement.
Denote by $X \sim \mathrm{Hypergeometric}(N,k,n)$ the number of colluders in the panel; then
\(\mathbb{E}[X] = np\).

The panel reaches a verdict once at least
\[
  t \;=\; m+1 \;=\; \frac{n+1}{2}
\]
members agree, so the colluders succeed only if \(X \ge t\).
Define the gap
\[
  \delta
  \;=\;
  \frac{t}{n} - p
  \;=\;
  \Bigl(\dfrac12 - p\Bigr) + \frac{1}{2n}
  \;>\; 0,
\]
because \(p<\tfrac12\).

Hoeffding’s inequality for hypergeometric variables  
\cite{hoeffding1963, chavatal1979} states
\begin{equation*}
  P(X \ge t)
  \;=\;
  P\left(X - \mathbb{E}[X] \;\ge\; n\delta\right)
  \;\le\;
  e^{-2n\delta^{2}}.
\end{equation*}
Since \(\delta \ge \tfrac12 - p\), we obtain
\begin{equation*}
  P(X \ge t)
  \;\le\;
  \exp\Bigl[-2n\,(\tfrac12 - p)^{2}\Bigr]
  \;=\;
  e^{-\Omega n},
\end{equation*}
where \(\Omega = 2(\tfrac12 - p)^{2} > 0\) is a constant that depends only on
\(p\) (and hence only on the fixed \(k\)).

The probability that the colluders can sway the verdict is bounded above
by \(e^{-\Omega n}\) which decays exponentially in the panel size \(n\),
as claimed.
\end{proof}

\subsection{Numerical Verification}
\label{app:collusion_tables}

The closed-form Hoeffding inequality in \S\ref{app:collusion_proof} predicts that the likelihood of a coordinated bloc of $k$ dishonest jurors overturning a verdict falls exponentially as the panel size $n$ rises. To confirm that analysis the \textit{exact} collusion probabilities for a range of realistic parameters are computed in Python using  SciPy’s hypergeom \texttt{P = 1 - hypergeom.cdf(t-1, N, k, n)} which returns $P(X\ge t)$ for $X\sim\mathrm{Hypergeom}(N,k,n)$.  To avoid floating-point underflow, probabilities below $10^{-15}$ are clamped to $10^{-15}$.  These exact values are formatted in scientific notation and appear verbatim in \autoref{tab:collusion-probabilities}, providing a precise numerical verification of the $e^{-\Omega n}$ decay predicted by the theory.

\begin{table}[h]
\centering
\sisetup{
  detect-weight,
  table-number-alignment = center,   
  table-format         = 1.3e-2,     
  input-symbols        = <,          
  table-space-text-pre = {<}         
}

\caption{Exact collusion probabilities $P(\text{collusion}) = P\bigl(X \ge m+1\bigr)$ with $X\sim\text{Hypergeometric}(N,k,n)$, listed for odd jury sizes $n=2m+1$ (rows) and corruption ratios $k/N$ from $5\%$ to $30\%$ (columns); values below $10^{-10}$ are shown as $<10^{-10}$.}

\label{tab:collusion-probabilities}

\begin{adjustbox}{width=\textwidth}
\begin{tabular}{c *{6}{S}}
\toprule
\multirow{2}{*}{\textbf{Jury size $n$}} &
\multicolumn{6}{c}{\textbf{Corruption ratio $k/N$}} \\
\cmidrule(lr){2-7}
& {5\%} & {10\%} & {15\%} & {20\%} & {25\%} & {30\%} \\
\midrule
11 & 5.65e-06 & 2.92e-04 & 2.64e-03 & 1.16e-02 & 3.42e-02 & 7.81e-02 \\
15 & 1.74e-07 & 3.29e-05 & 6.03e-04 & 4.21e-03 & 1.72e-02 & 4.99e-02 \\
21 & 9.80e-10 & 1.30e-06 & 6.86e-05 & 9.57e-04 & 6.37e-03 & 2.63e-02 \\
25 & {\textless\num{1e-10}} & 1.53e-07 & 1.64e-05 & 3.62e-04 & 3.33e-03 & 1.74e-02 \\
31 & {\textless\num{1e-10}} & 6.28e-09 & 1.94e-06 & 8.57e-05 & 1.28e-03 & 9.44e-03 \\
35 & {\textless\num{1e-10}} & 7.51e-10 & 4.71e-07 & 3.30e-05 & 6.81e-04 & 6.34e-03 \\
41 & {\textless\num{1e-10}} & {\textless\num{1e-10}} & 5.69e-08 & 7.96e-06 & 2.66e-04 & 3.51e-03 \\
43 & {\textless\num{1e-10}} & {\textless\num{1e-10}} & 2.81e-08 & 4.96e-06 & 1.95e-04 & 2.88e-03 \\
51 & {\textless\num{1e-10}} & {\textless\num{1e-10}} & 1.69e-09 & 7.53e-07 & 5.65e-05 & 1.33e-03 \\
61 & {\textless\num{1e-10}} & {\textless\num{1e-10}} & {\textless\num{1e-10}} & 7.21e-08 & 1.21e-05 & 5.09e-04 \\
71 & {\textless\num{1e-10}} & {\textless\num{1e-10}} & {\textless\num{1e-10}} & 6.94e-09 & 2.63e-06 & 1.97e-04 \\
81 & {\textless\num{1e-10}} & {\textless\num{1e-10}} & {\textless\num{1e-10}} & 6.71e-10 & 5.72e-07 & 7.64e-05 \\
91 & {\textless\num{1e-10}} & {\textless\num{1e-10}} & {\textless\num{1e-10}} & {\textless\num{1e-10}} & 1.25e-07 & 2.98e-05 \\
101 & {\textless\num{1e-10}} & {\textless\num{1e-10}} & {\textless\num{1e-10}} & {\textless\num{1e-10}} & 2.73e-08 & 1.17e-05 \\
\bottomrule
\end{tabular}
\end{adjustbox}
\end{table}

\subsection{Key Findings}

\autoref{tab:collusion-probabilities} confirms that collusion risk falls off extremely rapidly as jury size increases. For a fixed 10 percent corruption level, the exact hypergeometric tail probability drops from about 1.36 percent at $n=11$ to below $10^{-10}$ by $n=81$, matching the $e^{-\Omega n}$ bound in \autoref{thm:collusion}. Once the panel reaches twenty-one jurors, further increases in the total pool size have a negligible impact on collusion risk. In that regime the probability depends almost entirely on the panel size \(n\) and the dishonest fraction \(p=k/N\), and not on \(N\) itself.

These results yield simple operational guidelines. To keep the chance of a coordinated overthrow below 0.1 percent, a panel of 21 jurors suffices when up to 5 percent may be dishonest; if the dishonest share could reach 20 percent, raising the panel to 31 members maintains the same risk cap. Beyond $n=41$, additional jurors provide negligible extra security for typical corruption levels and subtract the potential income of other parties according to \cref{eq:beta_breakdown}.

\section{Scalability Constraints and Capacity Thresholds}

\subsection{Formal Proof}
\label{app:capacity}

\begin{theorem}[Juror Capacity Threshold]\label{thm:capacity}
Let $\lambda$ be the challenge-arrival rate, $n$ the panel size, $h$ the expected active time a juror spends on one case, and $a$ the active time a juror is willing to supply per unit time.
Define the integer
\begin{equation}
N_{\min}\;=\;\left\lceil \dfrac{\lambda n h}{a} \right\rceil.
\label{eq:min_jurors}
\end{equation}

The system achieves both (i) finite expected backlog and (ii) the collusion guarantee of \autoref{thm:collusion} \emph{if and only if} the juror-pool size satisfies $N \ge N_{\min}$.
\end{theorem}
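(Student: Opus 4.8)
The plan is to treat the ``if and only if'' as two independent equivalences and to observe that the two requirements in the theorem --- finite expected backlog and the collusion bound --- are controlled by essentially disjoint parameters, so the combined threshold is just the larger of two separate ones. First I would model the dispute pipeline as a queueing system: challenges arrive at rate $\lambda$, each spawns a panel of $n$ jurors, and each juror contributes $h$ units of active work per case, so the aggregate service demand is $\lambda n h$ work-units per unit time. With $N$ jurors each supplying $a$ units of capacity per unit time, the total service rate is $Na$. By a standard stability argument for work-conserving queues (e.g.\ Loynes' theorem, or simply the $\rho<1$ condition for an $M/G/1$-type or $G/G/c$-type queue), the expected backlog is finite precisely when the offered load is strictly below capacity, i.e.\ $\lambda n h < Na$, which rearranges to $N > \lambda n h / a$, and since $N$ is an integer, to $N \ge \lceil \lambda n h / a\rceil = N_{\min}$. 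Conversely, if $N < N_{\min}$ then $\rho \ge 1$ and the backlog grows without bound in expectation, establishing direction (i) in both directions.

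The second requirement is handled by \autoref{thm:collusion}: that theorem already shows the collusion probability is bounded by $e^{-\Omega n}$ with $\Omega = 2(\tfrac12 - k/N)^2$, which holds for \emph{any} pool size $N$ satisfying $N \gg n$ and $k/N < \tfrac12$ --- in particular for every $N \ge N_{\min}$, since $N_{\min} \ge n$ whenever $\lambda h \ge a$ (and when $\lambda h < a$ one simply takes $N = \max\{N_{\min}, n_0\}$ for a fixed baseline $n_0$, which I would fold into the statement as the operative lower bound). So the collusion guarantee imposes no binding constraint beyond $N \ge N_{\min}$ once the backlog condition is met; the threshold $N_{\min}$ is the single point at which \emph{both} properties switch on, giving the ``if and only if''. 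I would close by noting monotonicity: increasing $N$ only decreases $\rho$ and only increases $\Omega$ (for fixed absolute $k$, the honest fraction rises), so the feasible region is exactly the up-set $\{N : N \ge N_{\min}\}$.

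The main obstacle is making the queueing half rigorous without over-committing to a specific arrival or service distribution. The clean statement ``finite expected backlog $\iff \rho < 1$'' is true for the classical $M/G/1$ queue and for $G/G/1$ under mild moment assumptions, but for a multi-server panel-based system the mapping to a textbook model needs care: a single challenge occupies $n$ jurors simultaneously for correlated durations, which is not an ordinary $G/G/c$ queue. I would sidestep this by working at the level of aggregate work (total juror-seconds demanded versus supplied per unit time), where a fluid/rate-conservation argument gives the stability boundary $\lambda n h = Na$ cleanly and distribution-free; the sharp ``iff'' at the boundary itself ($\rho = 1$) is the only delicate point, and there I would either invoke the standard null-recurrence result or simply state the theorem for $\rho \neq 1$ and treat the boundary as a measure-zero edge case. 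A secondary subtlety is the edge case $N_{\min} < n$: a panel of size $n$ literally cannot be drawn from fewer than $n$ jurors, so the genuine lower bound is $\max\{N_{\min}, n\}$, and I would make that explicit rather than leave it implicit in ``$n \ll N$''.
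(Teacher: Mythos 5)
Your proposal matches the paper's proof in essentially every respect: the collusion guarantee is decoupled and inherited from \autoref{thm:collusion} for any $N \ge n$, necessity follows from the same work-conservation count of juror-hours ($\lambda n h t$ demanded versus $Nat$ supplied, with divergence via Little's law), and sufficiency is the same queueing-stability argument. Your flagged concern about the boundary $\rho = 1$ is well placed --- the paper's own proof models the system as an $M/G/c$ queue with $c=\lfloor N/n\rfloor$ and asserts positive recurrence from $\rho \le 1$, which is only valid for $\rho < 1$ --- so your suggestion to treat the boundary case separately (and to make the $\max\{N_{\min},n\}$ floor explicit) is, if anything, more careful than the published argument.
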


\begin{proof}
Select \(n\) via \autoref{thm:collusion} so that any bloc of \(k\) jurors can swing the verdict with probability at most \(\varepsilon\); this bound is independent of the pool size once \(N\ge n\), so collusion security is satisfied for every \(N\) under consideration.

Turn to capacity.  Each dispute consumes \(n h\) juror-hours, hence demand over a horizon \([0,t]\) is $\lambda n h t$.  The pool can deliver at most \(Nat\) juror-hours in the same interval.  If \(N< N_{\min}\) then \(Na<\lambda n h\) and the expected shortfall \((\lambda n h-Na)t\) grows linearly, forcing the backlog and the expected waiting time to diverge by Little’s law \(L=\lambda W\).  

Conversely, when \(N\ge N_{\min}\) we have \(Na\ge\lambda n h\). Model the system as an \(M/G/c\) queue with \(c=\lfloor N/n\rfloor\) parallel servers and service rate \(\mu=1/h\).  The load is \(\rho=\lambda/(c\mu)\le 1\); by standard queueing results \cite{queuing-theory2018} the Markov process is positive recurrent, which gives finite expected queue length \(L\). Applying Little’s law \(L = \lambda W\) immediately yields a finite expected latency \(W\).

Thus \(N\ge N_{\min}\) is both necessary and sufficient for stability, while the security requirement was already ensured by the choice of \(n\).
\end{proof}

\begin{table}[h]
\centering
\begin{threeparttable}
    \caption{Minimum juror pool size $N_{\min}=\lceil\lambda n h / a\rceil$ required to handle the observed dispute rate $\lambda$ for each platform, shown for three staffing configurations: Quick ($n=21$, $h=0.5\,\text{h}$, $a=2\,\text{h}$), Standard ($n=31$, $h=1\,\text{h}$, $a=4\,\text{h}$), and Thorough ($n=35$, $h=2\,\text{h}$, $a=8\,\text{h}$); content volumes are posts per day and $\lambda$ is disputes per hour.}
    \label{tab:juror-capacity}
    \begin{tabular}{@{}l r r r r r@{}}
    \toprule
    \textbf{Platform} 
      & \textbf{Content Volume} 
      & \textbf{Dispute Rate $\lambda$} 
      & \multicolumn{3}{c}{\textbf{Required Jurors} $N_{\text{min}}$} \\
    \cmidrule(lr){4-6}
    & (posts/day) & (disputes/hr) 
      & \textbf{Quick} 
      & \textbf{Standard} 
      & \textbf{Thorough} \\
    \midrule
    Small Community    & 100K   & 4     & 22  & 33   & 37   \\
    Reddit             & 1.3M   & 108   & 569 & 840  & 948  \\
    Twitter/X          & 500.0M & 104K  & 547K& 807K & 911K \\
    Facebook           & 4.0B   & 500K  & 2.6M& 3.9M & 4.4M \\
    \bottomrule
    \end{tabular}
    \begin{tablenotes}
    \footnotesize
    \item Juror capacity is calculated using \(N_{\min}= \lceil \lambda n h/a\rceil\) where \(\lambda\) is dispute arrival rate, \(n\) is panel size, \(h\) is hours per case and \(a\) is available juror hours per day. 
    \end{tablenotes}
\end{threeparttable}
\end{table}

\subsection{Key Findings}

The capacity analysis in \cref{eq:min_jurors} shows that the minimum juror‑pool size grows linearly with the dispute arrival rate and the average effort required per case. Because these parameters are design choices or can be estimated from platform statistics, scaling the system reduces to selecting any combination of $(\lambda,n,h,a)$ that keeps $N_{\min}$ within a practical range.

\autoref{tab:juror-capacity} evaluates the capacity formula in \cref{eq:min_jurors}. For each of four representative platforms $\lambda$ is estimated by multiplying publicly reported daily post volumes by a conservative challenge ratio (0.1–0.5\%). Even under conservative assumptions, moderate panel sizes (21-35) and only 2–4 volunteer hours per juror per day, the required pools remain in the $10^{4}\!-\!10^{6}$ range for the largest networks. That is well below one‑tenth of a percent of their daily active users, indicating that a sufficiently large and diverse jury is readily attainable once incentives for creators, challengers and jurors are in place.

Consequently, jury throughput is not a fundamental bottleneck; it is a constraint‑satisfaction problem. Provided the incentives attract enough active creators and challengers, the necessary juror capacity can be met by tuning panel size, deliberation time and individual time commitments while preserving the collusion‑resistance guarantees proved in \autoref{thm:capacity}.

\end{document}